\title[Random Latin squares and Sudoku designs generation]{Random Latin squares and Sudoku designs generation}
\author[R.Fontana]{Roberto Fontana}
\address{Politecnico di Torino,
Dept.of Mathematical Sciences,
Torino,
Italy.}
\email{roberto.fontana@polito.it}
\newtheorem{theorem}{Theorem}
\begin{document}
\begin{abstract}
Uniform random generation of Latin squares is a classical problem. In this paper we prove that both Latin squares and Sudoku designs are maximum cliques of properly defined graphs. We have developed a simple algorithm for uniform random sampling of Latin squares and Sudoku designs. It makes use of recent tools for graph analysis. It has been implemented using SAS.
\end{abstract}

\section{Introduction}
Generating uniformly distributed random Latin squares is a relevant topic. Already in 1933, F. Yates (\cite{yates1933formation}) wrote \begin{quote} ... it would seem theoretically preferable to choose a square at random from all the possible squares of given size. \end{quote} The widely used algorithm for generating random Latin squares of a given order is (\cite{jacobson1996generating}). It is based on a proper set of \emph{moves} that connect all the squares and make the distribution of visited squares \emph{approximately} uniform.

In this paper we present a new approach that is based on the equivalence between Latin squares and maximum cliques of a graph. This approach is also valid for Sudoku designs.

The paper is organized as follows. In Section \ref{sec:graph} the equivalence between Latin squares (Sudoku designs) and maximum cliques of a suitable graph is demonstrated. Section \ref{sec:algo} describes an algorithm for generating uniformly distributed random Latin squares and Sudoku designs. The corresponding SAS code is available in the supporting material. Concluding remarks are made in Section \ref{sec:conclusion}.

\section{Latin squares and Sudoku designs are maximum cliques} \label{sec:graph}

\subsection{Latin squares}
A Latin square of order $n$ is an $n \times n$ matrix $L_n$ in which each of $n$ distinct symbols appear $n$ times, once in each row and one in each column. For the sake of simplicity we consider the integers $1,2,\ldots,n$ as symbols. We denote by $L_n[.,c], c=1,\ldots,n$ ($L_n[r,.],r=1,\ldots,n$) the columns (resp. the rows) of $L_n$ and by $\mathcal{L}_n$ the set of all the Latin squares of order $n$. 

For example, a Latin square of order $4$, $L_4 \in \mathcal{L}_4$, is 
\begin{equation} \label{eq:ls4}
L_4=
\left( 
\begin{array}{cccc}
1&2&3&4\\
4&1&2&3\\
3&4&1&2\\
2&3&4&1
\end{array} 
\right)
\end{equation}

We can look at a Latin square $L_n=[\ell_{rc}; r,c=1,\ldots,n]$ as a set of $n$ disjoint permutation matrices, following the approach recently adopted in (\cite{dahl:2009}) and (\cite{fontana2011fractions}). For each symbol $s$, $s=1,\ldots,n$ we consider the $n \times n$ matrix $P^{(s)}=[p_{rc}^{(s)}; r,c=1,\ldots,n]$ where 
\begin{equation} \label{P0}
p_{rc}^{(s)} = 
\begin{cases}
1 & \textrm{if }  \ell_{rc}=s \\
0 & \textrm{otherwise}   
\end{cases} \;.
\end{equation}
Given a permutation matrix $P$ the corresponding permutation $\pi=(\pi_1,\ldots,\pi_n)$ of $(1,\ldots,n)$ is defined as
\[
\pi = P 1_n
\]
where $1_n$ is the $n \times 1$ column vector whose elements are all equal to $1$.
Viceversa, given a permutation $\pi=(\pi_1,\ldots,\pi_n)$ of $(1,\ldots,n)$ the corresponding permutation matrix $P=[p_{rc}; r,c=1,\ldots,n]$  is defined as
\begin{equation} \label{eq:phi}
p_{rc}= 
\begin{cases}
1 & \textrm{if }  c=\pi_r \\
0 & \textrm{otherwise}   
\end{cases}
\end{equation}
We denote by $\phi$ the function that transform a permutation $\pi$ of $(1,\ldots,n)$ into a permutation matrix $P=\phi(\pi)$ according to Equation \ref{eq:phi}.

For the Latin square $L_4 \in \mathcal{L}_4$ of Equation \ref{eq:ls4}, the permutation matrix $P^{(2)}$ is 

\begin{equation} \label{eq:ls4}
P^{(2)}=
\left( 
\begin{array}{cccc}
0&1&0&0\\
0&0&1&0\\
0&0&0&1\\
1&0&0&0
\end{array} 
\right)
\end{equation}
and the corresponding permutation $\pi^{(2)}$ of $(1,2,3,4)$ is 
\[
\pi^{(2)}=(2,3,4,1) \; .
\]

It immediately follows that a Latin square of order $n$ can be written as 
\begin{equation} \label{eq:ls}
L_n=P^{(1)}+2P^{(2)}+\ldots+nP^{(n)}
\end{equation}
where $P^{(s)}, s=1,\ldots,n$ are \emph{mutually disjoint} permutation matrices. Two permutation matrices $P^{(s)}$ and $P^{(t)}$ are disjoint if and only if $p_{rc}^{(s)} p_{rc}^{(t)} =0$ for each $r,c \in \{1,\ldots,n\}$. Equivalently two permutations $\pi^{(s)}$ and $\pi^{(t)}$ are disjoint if and only if $\pi^{(s)}(r) \neq \pi^{(t)}(r)$ for $r=1,\ldots,n$. 

Without loss of generality, as we will explain below, let us suppose that $P^{(1)}=I_n$ where $I_n$ is the $n \times n$ identity matrix. The permutation $\pi^{(1)}$ corresponding to $P^{(1)}$ is the identity permutation $\iota_n$, $\pi^{(1)}\equiv \iota_n=(1,\ldots,n)$.

Let us denote by $\mathcal{P}_{n}$ the set of all the permutations of $\{1,\ldots,n\}$ and, given $\pi \in \mathcal{P}_{n}$,  by $\mathcal{L}_n^{\pi} \subset \mathcal{L}_n$ the set of all the Latin squares of order $n$ for which $P^{(1)}=\phi(\pi)$ and $P^{(2)} < \ldots < P^{(n)}$ where $P^{(s)} < P^{(t)}$ or, equivalently, $\pi^{(s)} < \pi^{(t)}$ means that $(\pi^{(s)}_1,\ldots, \pi^{(s)}_n) <_{lex} (\pi^{(t)}_1,\ldots, \pi^{(t)}_n)$. The symbol "$<_{lex}$" denotes the standard lexicographic order, $(a_1,\ldots,a_n) <_{lex} (b_1,\ldots,b_n) \Leftrightarrow  \exists m>0 \; \forall i<m \;\; a_i=b_i$ and $a_m < b_m$. For simplicity we will write "$<$" in place of "$<_{lex}$". As it will become clear later on, any order between permutations can be chosen.

Let us consider $\mathcal{L}_n^{\iota_n}$, the set of all the Latin squares of order $n$ for which $P^{(1)}=I_n$ and $P^{(2)} < \ldots < P^{(n)}$. As $\mathcal{L}_n^{\iota_n}$ is built, we can generate all the Latin squares of order $n$, $L_n \in \mathcal{L}_n$, considering
\begin{enumerate}
\item all the $(n-1)!$ permutations $(s_2,\ldots,s_n)$ of the symbols $2,\ldots,n$ and assigning them to the permutation matrices $P^{(2)},\ldots P^{(n)}$ 
\[
I_n+s_2P^{(2)}+\ldots+s_nP^{(n)}
\]
\item all the $n!$ sets $\mathcal{L}_n^{\pi}$ where $\pi\in \mathcal{P}_n$ is a permutation of $(1,\ldots,n)$. We observe that $\mathcal{L}_n^{\pi}$ contains all the Latin squares that are generated permuting the columns of a Latin square $L_n$ of $\mathcal{L}_n^{\iota_n}$ 
\[
\mathcal{L}_n^{\pi}=\{ [ L_n[., \pi_1]|\ldots| L_n[., \pi_n] ]: L_n \in \mathcal{L}_n^{\iota_n} \}
\]  
\end{enumerate}
It follows that in order to generate a random Latin square $L_n$ it is sufficient:
\begin{enumerate}
\item to generate a random Latin square $L_n^{(1)} \in \mathcal{L}_n^{\iota_n}$;
\item to generate $L_n^{(2)}$ by a random permutation of the symbols $2,\ldots,n$ of $L_n^{(1)}$; 
\item to generate $L_n$  by a random permutation of the columns of $L_n^{(1)}$.
\end{enumerate}

We observe that the number $\#\mathcal{L}_n$ of Latin squares of order $n$ is 
\begin{equation} \label{eq:num_ls}
\#\mathcal{L}_n=n!(n-1)!\#\mathcal{L}_n^{\iota_n}
\end{equation} 

To generate a Latin square $L_n \in \mathcal{L}_n^{\iota_n}$ we have to build $n-1$ permutation matrices $P^{(s)}, s=2,\ldots,n$, $P^{(2)} < \ldots < P^{(n)}$, that are mutually disjoint and that are disjoint with $I_n$. In the language of permutations, we have to build $n-1$ \emph{derangements} $\delta^{(s)}, s=2,\ldots,n$ of $(1,\ldots,n)$, $\delta^{(2)} < \ldots < \delta^{(n)}$ such that $\delta^{(s)}(r) \neq \delta^{(t)}(r), r=1,\ldots,n$ for each $s,t \in \{2,\ldots,n\}, s \neq t$. 

Let $\mathcal{D}_{n} \subset \mathcal{P}_{n}$ be the set of all the derangements of $(1,\ldots,n)$: we denote by $d_{n}$ the number of derangements of $(1,\ldots,n)$, $d_{n}=\#\mathcal{D}_{n}$.

Let $\mathcal{G}_{n}=(\mathcal{V}_{n},\mathcal{E}_{n})$ be the undirected graph whose set of vertices $\mathcal{V}_{n}$ is the set of derangements $\mathcal{D}_{n}$ and whose set of edges $\mathcal{E}_{n}$ contains all the couples of derangements $(\delta^{(i)},\delta^{(j)}), i < j$ such that $\delta^{(i)}(r)\neq\delta^{(j)}(r), r=1,\ldots,n$. The following theorem holds.

\begin{theorem} \label{theo}
The Latin squares $L_n$ of order $n$ of $\mathcal{L}_n^{\iota_n}$ are the ordered cliques $C_{n-1}$ of size $n-1$ of $\mathcal{G}_{n}=(\mathcal{V}_{n},\mathcal{E}_{n})$ 
\[
C_{n-1}=(\delta^{(2)},\ldots,\delta^{(n)}), \;\; \delta^{(2)}<\ldots<\delta^{(n)}
\]
$C_{n-1}$ are the largest cliques of $\mathcal{G}_{n}$. 
\end{theorem}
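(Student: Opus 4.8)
The plan is to exhibit a bijection between the Latin squares in $\mathcal{L}_n^{\iota_n}$ and the ordered cliques of size $n-1$ in $\mathcal{G}_n$, and then to bound the clique size from above by a counting argument. The starting observation, already recorded in Equation \ref{eq:ls}, is that any $L_n \in \mathcal{L}_n^{\iota_n}$ equals $I_n + 2P^{(2)} + \ldots + nP^{(n)}$, a sum of $n$ mutually disjoint permutation matrices with $P^{(1)} = I_n$. Since each $P^{(s)}$ with $s \geq 2$ is disjoint from $I_n$, its permutation $\delta^{(s)}$ satisfies $\delta^{(s)}(r) \neq r$ for every $r$; that is, $\delta^{(s)}$ is a derangement and hence a vertex of $\mathcal{G}_n$.

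For the forward direction I would start from $L_n \in \mathcal{L}_n^{\iota_n}$ and read off the tuple $(\delta^{(2)}, \ldots, \delta^{(n)})$. Mutual disjointness of the matrices $P^{(s)}$ means $\delta^{(s)}(r) \neq \delta^{(t)}(r)$ for all $r$ and all $s \neq t$, which is exactly the condition defining the edge set $\mathcal{E}_n$; thus every pair is adjacent and the tuple is a clique. The defining inequalities $P^{(2)} < \ldots < P^{(n)}$ of $\mathcal{L}_n^{\iota_n}$ make it an \emph{ordered} clique, so the map is well defined, and it is clearly injective, since distinct Latin squares induce distinct symbol-placement matrices.

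The converse is where the real content lies. Given an ordered clique $(\delta^{(2)}, \ldots, \delta^{(n)})$ of size $n-1$, set $P^{(s)} = \phi(\delta^{(s)})$ and form $L_n = I_n + 2P^{(2)} + \ldots + nP^{(n)}$. Together with $I_n$ these are $n$ permutation matrices that are pairwise disjoint: the derangements are disjoint from $I_n$, and the clique condition makes them disjoint from one another. Each contributes exactly $n$ ones, so they carry $n \cdot n = n^2$ ones in total; disjointness forces these to occupy distinct cells, and since the grid has exactly $n^2$ cells, every cell receives exactly one $1$. Hence $L_n$ assigns a single symbol to each cell, and because each $P^{(s)}$ has one $1$ per row and per column, each symbol appears once per row and once per column. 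Therefore $L_n$ is a Latin square with $P^{(1)} = I_n$ and the prescribed ordering, i.e. $L_n \in \mathcal{L}_n^{\iota_n}$. This inverts the forward map and establishes the bijection. I expect this sufficiency step to be the main obstacle: the key realization is that pairwise disjointness of $n$ permutation matrices is not merely necessary but also \emph{sufficient} to produce a Latin square, and this is precisely what the cell-counting argument delivers.

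Finally, for maximality I would reuse the same pigeonhole count. If $(\delta^{(2)}, \ldots, \delta^{(k+1)})$ is any clique of $k$ derangements, adjoining $I_n$ yields $k+1$ pairwise disjoint permutation matrices, whose $(k+1)n$ ones must lie in distinct cells, so $(k+1)n \leq n^2$ and thus $k \leq n-1$. Hence no clique exceeds size $n-1$, and since cliques of size $n-1$ do exist — for example the one arising from the cyclic Latin square, whose symbol-$1$ positions form the diagonal $I_n$ — these are exactly the largest cliques of $\mathcal{G}_n$.
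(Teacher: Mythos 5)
Your proof follows essentially the same route as the paper's: decompose $L_n \in \mathcal{L}_n^{\iota_n}$ into disjoint permutation matrices via Equation \ref{eq:ls}, check the clique correspondence in both directions, and bound the clique size. The paper's own write-up is terser---it declares the converse ``immediately evident'' and for maximality only asserts that one cannot find $m>n$ disjoint derangements (which, as stated, would still allow a clique of size exactly $n$)---so your cell-counting argument, adjoining $I_n$ to get $(k+1)n \leq n^2$ and hence $k \leq n-1$, together with the cyclic-square existence example, supplies correct details the paper leaves implicit and is in fact tighter at the maximality step.
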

\begin{proof}
A Latin square $L_n \in \mathcal{L}_n^{\iota_n}$ can be written as
\[
L_n = I_n + 2P^{(2)}+\ldots + nP^{(n)}
\]
where $P^{(s)}$ is the permutation matrix corresponding to the derangement $\delta^{(s)}=P^{(s)}1_n, s=2,\ldots,n$ and $\delta^{(2)}<\ldots<\delta^{(n)}$. The derangements $\delta^{(s)}$ are disjoint. It follows that  $\{\delta^{(2)},\ldots,\delta^{(n)}\}$ is a clique of $\mathcal{G}_{n}$.
Viceversa, given the clique $\{\delta^{(2)},\ldots,\delta^{(n)}\}$ with $\delta^{(2)}<\ldots<\delta^{(n)}$ we build
\[
L_n^{\star}=I_n + 2\phi(\delta^{(2)})+\ldots + n\phi(\delta^{(n)})
\]
where $\phi$ is defined in Equation \ref{eq:phi}. It is immediately evident that $L_n^{\star}=L_n$.

Finally, Latin squares correspond to the largest cliques because it is evident that it is not possible to find a set of $m>n$ derangements of $\{1,\ldots,n\}$ that are disjoint.
\end{proof}

\subsection{Sudoku designs}
For the definition of Sudoku designs we refer to \cite{bailey|cameron|connelly:2008}
\begin{quote}
In 1956, W. U. Behrens (\cite{behrens1956}) introduced a specialisation of Latin squares
which he called \emph{gerechte}. The $n \times n$ grid is partitioned into $n$ regions,
each containing $n$ cells of the grid; we are required to place the
symbols $1, . . . , n$ into the cells of the grid in such a way that each symbol
occurs once in each row, once in each column, and once in each region.
The row and column constraints say that the solution is a Latin square,
and the last constraint restricts the possible Latin squares.
By this point, many readers will recognize that solutions to Sudoku puzzles
are examples of \emph{gerechte} designs, where $n = 9$ and the regions are the
$3 \times 3$ subsquares. (The Sudoku puzzle was invented, with the name "number
place", by Harold Garns in 1979.)
\end{quote} 

Analogously to Latin squares, we describe a Sudoku design in terms of Sudoku permutation matrices, \cite{dahl:2009} and \cite{fontana2011fractions}.

Let us define the regions in which the matrix is divided. We will refer to regions as \emph{boxes}. Let us consider a $n \times n$ matrix, where $n=p^2$ and $p$ is a positive integer.  Its row and column positions $(i,j)$ are coded with the integer from $0$ to $p^2-1$. We define boxes $B_{k,m}, \; k,m=0,\ldots,p-1$ as the following sets of positions 
\[
B_{k,m}=\left\{ (i,j):  kp \leq i < (k+1)p , mn \leq j < (m+1)n \right\}
\]
It follows that any $n \times n$ matrix $A$ can be partitioned into submatrices $A_{km}$ corresponding to boxes $B_{k,m}$.

An $n \times n$ matrix $S_n$ is a Sudoku, if in each row, in each column and in each box, each of the integers $1, \ldots, n$ appears exactly once. We denote by $\mathcal{S}_n$ the set of all the $n \times n$ Sudokus. In Sudoku literature, the set of boxes $B_{b,m}, \; m=0,\ldots,p-1$ constitutes the $b^\textrm{th}$ \emph{band}, $b=0,\ldots,p-1$, while the set of boxes $B_{k,s}, \; k=0,\ldots,p-1$ constitutes the $s^\textrm{th}$ \emph{stack}, $s=0,\ldots,p-1$. 

  
Let us define a Sudoku permutation matrix $\tilde{P}$, referred to as an S-matrix $\tilde{P}$, as a permutation matrix of order $n$ which has exactly one $''1''$ in each submatrix $P_{k,m}$ corresponding to boxes $B_{k,m}$, $k,m=0,\ldots,p-1$. Let us denote by $\tilde{\mathcal{P}}_{n} \subset \mathcal{P}_{n}$ the set of all Sudoku permutations. An $n \times n$ Sudoku $S_n$ identifies $n$ matrices $\tilde{P}^{(i)}, \; i=1,\ldots,n$, where $\tilde{P}^{(i)}$ is the S-matrix corresponding to the positions occupied by the integer $i$.
It follows that a Sudoku $S_n \in \mathcal{S}_n$ can be written as
\begin{equation} \label{eq:su}
S_n= \tilde{P}^{(1)} + 2 \tilde{P}^{(2)} + \ldots + n \tilde{P}^{(n)}
\end{equation}
We observe that $\tilde{P}^{(1)},\ldots, \tilde{P}^{(n)}$ are mutually disjoint and that Equation (\ref{eq:su}) is the analogous of Equation (\ref{eq:ls}) for Sudoku designs. 

We observe that the identity permutation $\iota_n$ is not a Sudoku permutation, apart from the trivial $n=2$ case.

We can easily generate S-matrices. Let us define a more compact representation of an S-matrix $S$, by building  the $p \times p$ matrix $S^{\star}$, whose elements are the only possible position, within each box, where $S$ is equal to $1$. 
Among the S-matrices we can define $S_{0,n}^{\star}$ whose elements $(S_{0,n}^{\star})_{km},\; k,m=1,\ldots,p$ are 
\begin{equation*} 
(S_{0,n}^{\star})_{km} = (m,k).
\end{equation*}
For the $n=4$ case we obtain
\begin{equation} \label{P04x4tilde}
S_{0,4}^{\star} =\left[ 
\begin{array}[h]{cc}
(1,1) & (2,1) \\
(1,2) & (2,2)	
\end{array}
\right].
\end{equation}
and the corresponding S-matrix $S_{0,4}$ is 
\begin{equation} \label{P04x4}
S_{0,n} = \left[ 
\begin{array}[h]{cccc}
1 & 0 &  0 &  0 \\
0  & 0  & 1  & 0 \\
0  & 1  & 0 &  0 \\
0  & 0 &  0  & 1	
\end{array}
\right].
\end{equation}
We will denote by $\sigma_{0,n}$ the permutation corresponding to the S-matrix $S_{0,n}$, $\sigma_{0,n}=S_{0,n}1_n$. It will play the role of the identity permutation $\iota_n$ for Sudoku designs.

All the S-matrices can be generated by permuting the rows within each band and the columns within each stack. It follows that the total number of S-matrices is $p!^{2p}$, \cite{dahl:2009}.

Let $\tilde{\mathcal{G}}_{n}=(\tilde{\mathcal{V}}_{n},\tilde{\mathcal{E}}_{n})$ be the undirected graph whose set of vertices $\tilde{\mathcal{V}}_{n}$ is the set of derangements of $\sigma_{0,n}$ \emph{that are also $S$-permutations}, briefly Sudoku-derangements,  and whose set of edges $\tilde{\mathcal{E}_{n}}$ contains all the couple of Sudoku-derangements $(\tilde{\delta}^{(i)},\tilde{\delta}^{(j)}), i < j$ such that $\tilde{\delta}^{(i)}(r)\neq\tilde{\delta}^{(j)}(r), r=1,\ldots,n$.

Theorem \ref{theo} holds if we replace the graph $\mathcal{G}_{n}$ with the graph $\tilde{\mathcal{G}}_{n}$. We observe that $\tilde{\mathcal{G}}_{n}$ is a subgraph of $\mathcal{G}_{n}$.

Let us denote by $\mathcal{S}_n^{\sigma_{0,n}}$ the set of all the Sudokus of order $n$ for which $\tilde{P}^{(1)}=S_{0,n}$ and $\tilde{P}^{(2)}< \ldots < \tilde{P}^{(n)}$. 

As $\mathcal{S}_n^{\sigma_{0,n}}$ is built, we can generate all the Sudokus of order $n=p^2$ considering
\begin{enumerate}
\item all the $(n-1)!$ permutations $(s_2,\ldots,s_n)$ of the symbols $2,\ldots,n$ and assigning them to the permutation matrices $\tilde{P}^{(2)},\ldots \tilde{P}^{(n)}$ 
\[
I_n+s_2\tilde{P}^{(2)}+\ldots+s_n\tilde{P}^{(n)}
\]
\item all the $p!^{2p}$ sets $\mathcal{S}_n^{\sigma}$ where $\sigma$ is a Sudoku permutation of $(1,\ldots,n)$. 
\end{enumerate}
For the total number of Sudokus of order $n$ we get Equation (\ref{eq:num_su}) which is the equivalent of Equation (\ref{eq:num_ls}): 
\begin{equation} \label{eq:num_su}
\# \mathcal{S}_n=(n-1)! p!^{2p} \# \mathcal{S}_n^{\sigma_{0,n}}
\end{equation}

\section{An algorithm for random sampling} \label{sec:algo}
\subsection{Latin squares}


The algorithm takes $n$ as input and gives $L_n$, a random Latin square of order $n$, as output.

The main steps of the algorithm are as follows. 
\begin{enumerate}
\item Build the undirected graph $\mathcal{G}_{n}=(\mathcal{V}_{n},\mathcal{E}_{n})$;
\begin{enumerate}
\item generate $\mathcal{V}_{n} \equiv \mathcal{D}_{n}$, the set of all the derangements $\delta^{(i)}, i=1,\ldots, d_{n}$ of $\{1,\ldots,n\}$;
\item generate $\mathcal{E}_{n}$, the set of all the edges corresponding to all the couples of derangements $(\delta^{(i)},\delta^{(j)}), i < j$ such that $\delta^{(i)}(r)\neq\delta^{(j)}(r), r=1,\ldots,n$.
\end{enumerate}
\item generate all the largest cliques of $\mathcal{G}_{n}$
\item randomly extract one of the largest clique and order its vertices lexicographically. Let use denote this ordered clique by $C_{n-1}=(\delta^{(2)},\ldots,\delta^{(n)})$. The corresponding Latin square is
\[
L_n^{(1)}=I_n + 2\phi(\delta^{(2)})+\ldots + n\phi(\delta^{(n)})
\]
\item randomly choose one permutation $\sigma=(s_2,\ldots,s_n)$ of $(2,\ldots,n)$ and generate
\[
L_n^{(2)}=I_n + s_2\phi(\delta^{(2)})+\ldots + s_n\phi(\delta^{(n)})
\]
\item \label{alg:6} randomly choose one permutation $\gamma$ of $(1,\ldots,n)$ and generate $L_n$ permuting the columns of $L_n^{(2)}$ according to $\gamma$.  
\end{enumerate}

We describe the algorithm for $n=5$.

\begin{enumerate}
\item We generate $\mathcal{D}_{5}$ taking all the permutations $\delta$ of $(1,\ldots,5)$ such that $\delta(r)\neq r$, $r=1,\ldots,5$. $\mathcal{D}_{5}$ contains $44$ derangements. We denote by $\delta^{(i)}, i=1,\ldots,44$ the elements of $\mathcal{D}_{5}$.
\item We generate $\mathcal{E}_{5}$ considering all the $\binom {44}{2}=946$ couples of  derangements $(\delta^{(s)},\delta^{(t)})$, $\delta^{(s)}<\delta^{(t)}$ such that $\delta^{(s)}(r) \neq \delta^{(t)}(r), r=1,\ldots,5$. We find $276$ edges. The graph $\mathcal{G}_{5}$, generated using the function \verb|tkplot| of the R package igraph, (\cite{igraph}), is shown in Figure \ref{fig:g5}.


\begin{figure}
\centering
\makebox{\includegraphics{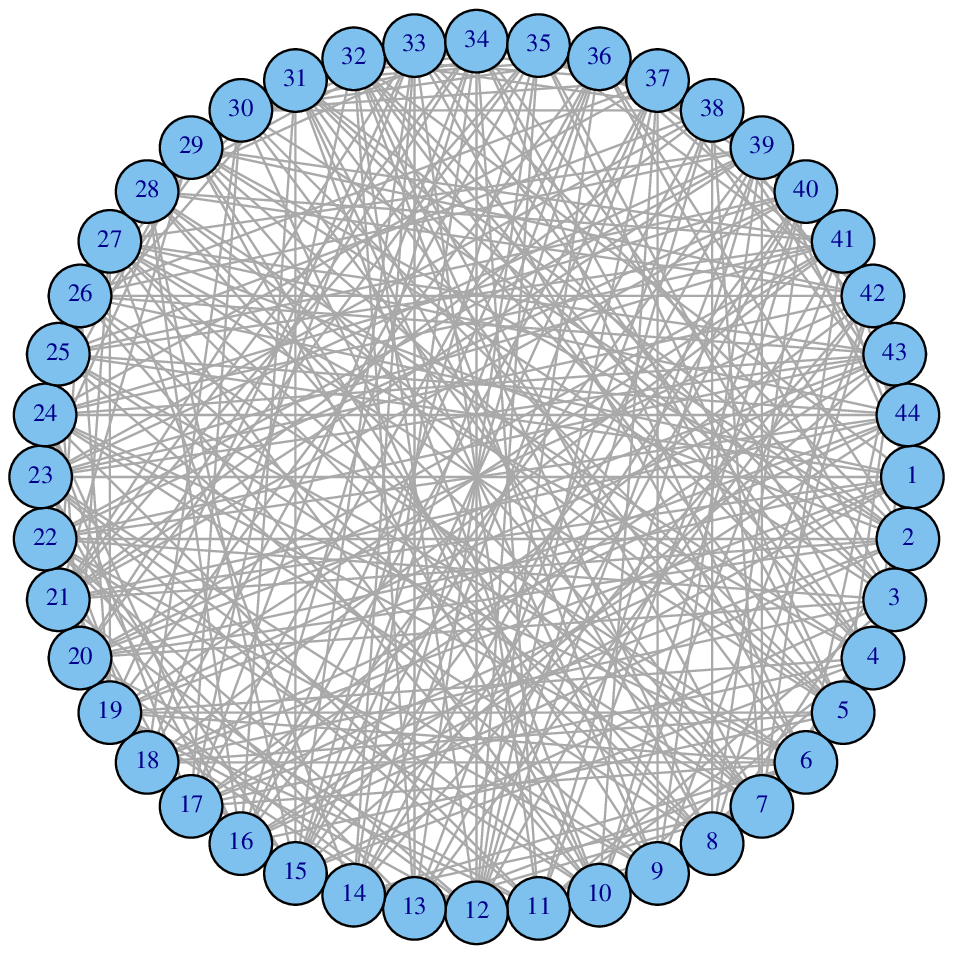}}
\caption{\label{fig:g5} The graph $\mathcal{G}_{5}$}
\end{figure}

\item We use the function \verb|largest.cliques| of the R package igraph, (\cite{igraph}), to get all the largest cliques of $\mathcal{D}_{0,5}$. Equivalently we can use the Optnet procedure of SAS/OR, (\cite{sasor:12}) or Cliquer, \cite{niskanen2003cliquer}. We find $56$ cliques of size $4$. 
\item We randomly choose one clique $C_4$ and we order it lexicographically
 \[
C_4=(\delta^{(11)},\delta^{(17)},\delta^{(23)},\delta^{(37)})
\]
where $\delta^{(11)}=(2, 5, 4, 3, 1)$, $\delta^{(17)}=(3, 4, 5, 1, 2)$, $\delta^{(23)}=(4, 1, 2, 5, 3)$ and $\delta^{(37)}=(5, 3, 1, 2, 4)$. The corresponding Latin square is $L_5^{(1)}=I_n + 2\phi(\delta^{(2)})+3\phi(\delta^{(17)})+4\phi(\delta^{(30)})+5\phi(\delta^{(36)})$, that is
\begin{equation} \label{eq:l5_1}
L_5^{(1)}=
\left( 
\begin{array}{ccccc}
1&2&3&4&5\\
4&1&5&3&2\\
5&4&1&2&3\\
3&5&2&1&4\\
2&3&4&5&1
\end{array} 
\right)
\end{equation}
\item we finally get $L_5$ by randomly choosing one permutation $\sigma$ for the symbols $2,\ldots,5$, $\sigma=(4, 3, 2, 5)$, and one permutation $\gamma$ for the columns $1,\ldots,5$, $\gamma=(3, 1, 2, 4, 5)$  
\begin{equation} \label{eq:l5}
L_5=
\left( 
\begin{array}{ccccc}
3&1&4&2&5\\
5&2&1&3&4\\
1&5&2&4&3\\
4&3&5&1&2\\
2&4&3&5&1
\end{array} 
\right)
\end{equation}

\end{enumerate}

\subsection{Sudoku designs}
The algorithm remains the same apart from the substitution of the graph $\mathcal{G}_{n}$ with $\tilde{\mathcal{G}}_{n}$ and by limiting the permutations of step (\ref{alg:6}) to the permutations of the rows within band and of the columns within stacks. For example, for the case $n=2^2$ we find three maximum cliques of $\tilde{\mathcal{G}}_{4}$. By randomly choosing one permutation of the symbols $2,3,4$ among the six available, and one S-matrix to be used as $\tilde{P}^{(1)}$, among the sixteen available, we can randomly generate one Sudoku from the total of $288$, ($\#\mathcal{S}_4=288$).

\subsection{Computational aspects}
We ran the algorithm using a standard laptop (CPU Intel Core i7-2620M CPU 2.70 GHz 2.70 GHz, RAM 8 Gb). We were able to solve the problems corresponding to the orders up to $n=7$ for which we found $16,942,080$ cliques. For $n=7$ we used Cliquer \cite{niskanen2003cliquer} to find all the cliques. Taking into account symbol and column permutations our algorithm was able to extract uniformly at random a Latin square of order $7$ among all the order $7$ Latin squares that are $7!6!16,942,080=61,479,419,904,000$. 

For Latin squares the number of nodes of $\mathcal{G}_{n}$ coincides with the number of derangements (see Table \ref{tab:der}).
For Sudoku designs the numbers of Sudoku-derangements are $7$ for $n=4$ and $17,972$ for $n=9$.
\begin{table}
\caption{\label{tab:der} Number of derangements of $(1,\ldots,n)$, $n\leq9$}
\centering
\fbox{%
\begin{tabular}{*{10}{c}}
\em 0&\em 1&\em 2&\em 3&\em 4&\em 5&\em 6&\em 7&\em 8&\em 9\\
\hline
1&0&1&2&9&44&265&1854&14833&133496\\
\end{tabular}}
\end{table}

If $n$ becomes large with respect to the available computational resources it is possible to replace the graph $\mathcal{G}_n$ ($\tilde{\mathcal{G}}_n$) with a random subgraph $\mathcal{A}_n^k$ ($\tilde{\mathcal{A}}_n^k$) of it, where $k$ denotes the number of the selected nodes. We point out that, if we take one clique at random from those of the subgraph $\mathcal{A}_n^k$ ($\tilde{\mathcal{A}}_n^k$), the distribution from which we are sampling is not uniform. Anyhow this approach can be useful to select the starting point of the algorithm described in \cite{jacobson1996generating} that is based on moves between different designs. We experimented this approach for the $9 \times 9$ Sudoku, which is the most common structure for the popular Sudoku puzzle.
We randomly chose $809$ Sudoku derangements among the $17,972$ available. The subgraph has $112,579$ edges. Its largest cliques have dimensions equal to $8$ and are $73$. By randomly choosing one clique, one permutation of the symbols $2,\ldots,9$ and one Sudoku matrix we can generate the Sudoku $S_9 \in \mathcal{S}_9$
\begin{equation} 
S_9=
\left( 
\begin{array}{ccccccccc}
1&3&4&5&7&6&2&9&8\\
8&7&2&1&4&9&6&3&5\\
6&9&5&3&2&8&1&7&4\\
7&1&9&8&5&3&4&2&6\\
2&8&6&7&1&4&9&5&3\\
4&5&3&6&9&2&8&1&7\\
3&4&1&9&6&7&5&8&2\\
5&2&8&4&3&1&7&6&9\\
9&6&7&2&8&5&3&4&1 
\end{array} 
\right)
\end{equation}

It is worth noting that recent advances in software for huge graph analysis (millions of nodes) make it possible to manage problems that are extremely interesting from a practical point of view.



\section{Conclusion} \label{sec:conclusion}
This paper presented a simple algorithm for uniform random sampling from the population of Latin squares and Sudoku designs. The algorithm is based on the largest cliques of proper graphs and has been implemented in SAS. The code exports the graph in a format that can be used by other software, like Cliquer, \cite{niskanen2003cliquer}. The algorithm could be run using the entire graph $\mathcal{G}_n$ up to an order $n$ equal to $7$ on a standard pc.

Future research will aim at testing the algorithm for higher orders. Recent advances in graph analytics on huge graph such those arising in social sciences (see e.g. \cite{shao2012managing} for an overview on the subject), make this objective feasible and challenging at the same time. 

\bibliographystyle{chicago}      
\bibliography{fb}
  
\end{document}